\newcommand{\e}{{\epsilon}}    
\newcommand{\V}{{\mathcal{V}}}  
\newcommand{\E}{{\mathcal{E}}}  
\newcommand{\N}{{\mathcal{N}}}  
\newcommand{\W}{{\mathcal{W}}}  
\newcommand{\G}{{\mathcal{G}}}  
\newcommand{\C}{{\mathcal{C}}}  
\newcommand{\R}{{\textbf{R}}}   
\newtheorem{theorem}{Theorem}
\begin{document}
\sloppy

\title{On the Capacity Region of Broadcast Packet Erasure Relay Networks With Feedback}

\author{
  \IEEEauthorblockN{Quan Geng}
  \IEEEauthorblockA{Google Research\\
    New York, NY 10011\\
    Email: qgeng@google.com}
    \\
\IEEEauthorblockN{Mindi Yuan}
 \IEEEauthorblockA{CSL and Dept. of ECE\\
   UIUC, IL 61801\\
   Email: myuan5@illinois.edu}
 \and
   \IEEEauthorblockN{Hieu T. Do}
  \IEEEauthorblockA{ Ericsson Research, \\
	Stockholm, Sweden\\
  Email: hieu.do@ericsson.com} 
 \\
 \IEEEauthorblockN{Yun Li}
 \IEEEauthorblockA{CSL and Dept. of ECE\\
   UIUC, IL 61801\\
   Email: yunli2@illinois.edu}
   \and
 \IEEEauthorblockN{Rui Wu}
 \IEEEauthorblockA{CSL and Dept. of ECE\\
   UIUC, IL 61801\\
   Email: ruiwu1@illinois.edu}\\
 \IEEEauthorblockN{Wei Ding}
 \IEEEauthorblockA{Google Research\\
   New York, NY 10011\\
   Email: vvei@google.com}
}

\maketitle

\begin{abstract}
We derive a new outer bound on the capacity region of broadcast traffic in multiple input broadcast packet erasure channels with feedback, and extend this outer bound to packet erasure relay networks with feedback.  We show the tightness of the outer bound for various classes of networks. An important engineering implication of this work is that for network coding schemes for parallel broadcast channels, the ``xor'' packets should be sent over correlated broadcast subchannels\footnote{This paper was published in the Proceedings of  the 53rd IEEE International Conference on Communications (ICC), Shanghai, 2019.}.
\end{abstract}

\section{Introduction} \label{sec:intro}

The broadcast packet erasure channel was introduced in \cite{Dana06}, which captures the broadcast nature of wireless communication \cite{Tse2005,GV_JSAC14, GL_ISIT14}. In packet erasure relay networks, each directed link connecting two nodes $i$ and $j$ is modeled as a packet erasure channel, which is a natural generalization of the binary erasure channel from binary symbol to packet. During each time slot, node $i$ can send out a packet of fixed size to node $j$. With probability $(1-\e_{ij})$ node $j$ receives the whole packet correctly, and with probability $\e_{ij}$ the packet is erased by the channel. Furthermore, due to the broadcast nature of wireless communication, during each transmission the transmitter sends out the same packet to all the nodes it is connected to. Additionally, in packet erasure relay networks we assume that there is no interference, i.e., each node can receive packets sent from different nodes simultaneously without interference. This assumption is valid when some mechanisms in practical systems are implemented to avoid interference, e.g., via frequency-division multiplexing.
In packet erasure channels with feedback,  each transmitter can get the channel output feedback immediately from the receivers after each transmission, i.e., the transmitter will know whether each receiver has got the packet or not. In practice, this type of feedback can be obtained using the Automatic Repeat-reQuest (ARQ) mechanism.

\subsection{Existing Work}

The broadcast capacity region of multiple input broadcast packet erasure channels without feedback was characterized in \cite{DH05} and can be achieved by the time sharing scheme. It turns out that feedback can significantly improve the capacity region.  The capacity region of $1$-to-$2$ broadcast packet erasure channels with feedback was derived in \cite{GT09} and is achievable by a simple networking coding scheme. 
The general $1$-to-$K$ broadcast packet erasure channels with feedback was independently studied in \cite{Gat10} and \cite{CCWangIT2012}, which characterized the capacity region for $K=3$. 
An XOR-based encoding scheme under a similar setting for three-user broadcast erasure channels with feedback was designed in \cite{Tassiulas14}. 
The linear network coding capacity region of two-receiver MIMO broadcast packet erasure channels with feedback was derived in \cite{Wang2012} and \cite{WangIT14}. For broadcast packet erasure relay networks with feedback, a random linear network coding scheme can achieve the capacity of unicast traffic \cite{Dana06}. For unicast traffic on packet erasure networks with feedback, a simple capacity-achieving ``dynamical routing'' scheme, which can be written as a linear program, was presented in \cite{GMS10}. It was shown that  local network coding and global routing can achieve the cut-set bound within a factor of $O(\log^3 k \log d_{\max})$ for $k$-unicast traffic in broadcast packet relay networks with commensurate feedback \cite{Sreeram11}, where $d_{\max}$ is the maximum degrees of nodes in the network. One direct extension of \cite{Sreeram11} is that for broadcast traffic, the same separation scheme can also achieve the cut-set bound within a factor of $O(\log^3 k \log d_{\max})$. An align-and-forward relaying communication scheme was studied in \cite{Avestimehr14} for two-hop erasure broadcast channels where the source does not have channel state information, and is optimal in terms of sum rates in certain regimes.
The two-user erasure broadcast channel in which only one of the receivers feeds the output back to the transmitter was studied in \cite{HeYang2017}.

\subsection{Our Contribution}

In this work, we derive a new outer bound on the capacity region of broadcast traffic in multiple input broadcast packet erasure channels with feedback, and  extend this outer bound to packet erasure relay networks  with feedback. The new outer bound combines  the standard cut-set bound technique with the capacity region of the degraded broadcast channel. We show the tightness of the outer bound for certain classes of networks. An important engineering implication of this work is that for linear network coding schemes for parallel broadcast channels, the ``xor'' packets should be sent over the correlated broadcast subchannels.

\subsection{Organization}

We describe the system model in Section \ref{sec:model}, present the new outer bounds for multiple input packet erasure channels with feedback and packet erasure relay networks with feedback in Section \ref{sec:bound1} and Section \ref{sec:bound2}, respectively. We show the tightness of the outer bounds for certain classes of networks in Section \ref{sec:example}. Section \ref{sec:conclusion} concludes this paper.

\section{System Model} \label{sec:model}
\subsection{Wireless Broadcast Packet Erasure Channel}

As introduced in Section \ref{sec:intro}, the broadcast packet erasure channel captures the broadcast nature of wireless communication by modeling each directed link between two nodes as a packet erasure channel. During each time slot, node $i$ sends out a packet of fixed size to all the nodes it is connected to. A node $j$ connected to node $i$ will receive the whole packet correctly with probability $(1-\e_{ij})$, and receive nothing with probability $\e_{ij}$. In the latter case the packet is said to be ``erased'' by the channel, and $\e_{ij}$ is called the erasure probability of link $ij$. Equivalently, in each time slot a node broadcasts the same symbol from a large field $GF(q)$ to all the nodes which it is connected to. In addition, we assume the channel is memoryless and time invariant, and erasure events over different links are independent.

More formally, consider a $1$-to-$K$ broadcast packet erasure channels, where a node $s$ is connected to $K$ nodes $t_1,t_2,\dots,t_K$. Let $[K]$ denote  $\{1,2,\dots,K\}$.  For $ j \in [K]$, the channel between nodes $s$ and $t_j$ is a packet erasure channel with erasure probability $\e_{j}$. During the $n$-th channel use, $s$ sends out a packet $X[n] \in GF(q)$.  Let $Y_j[n]$ be the symbol received by node $t_j$. If channel erasure events are independent for all links, then for any subset $A \subset [K]$,
\begin{align*}
  &\text{Prob} ( Y_j[n] = X[n], Y_{j'}[n] = *, \forall j \in A, j'\in A^C ) \nonumber \\
   = &\prod_{j\in A} ( 1 - \e_j) \prod_{j'\in A^C} \e_{j'},
\end{align*}
 where $*$ denotes that the packet has been erased by the channel, and $A^C$ denotes the complement  of $A$ in $[K]$.

A natural extension of the broadcast packet erasure channel is the broadcast packet erasure channel with feedback. In this channel we assume the transmitter immediately receives a perfect feedback from the receivers after each transmission, which indicates whether each receiver has received the packet or not. A practical mechanism for this type of feedback is the Automatic Repeat-reQuest (ARQ) protocol.

\subsection{Wireless Broadcast Packet Erasure Relay Networks}\label{Sec:PEC-RelayNet}

In wireless broadcast packet erasure relay networks, nodes are connected to other nodes via broadcast packet erasure channels. We can model a wireless broadcast erasure relay network by a directed graph $\G = (\V, \E)$, where $\V$ denotes the set of nodes and $\E$ denotes the set of links. A directed edge $(i,j) \in \E $ if node $i$ is connected to $j$, and the corresponding packet erasure probability is denoted as $\e_{ij}$. Wireless broadcast packet erasure relay networks capture the broadcast nature of wireless communication by forcing a node to send the same symbol to all the nodes it is connected to during each channel use, while we assume each node can receive packets sent from different nodes simultaneously without interference. This assumption is valid in practical systems when orthogonal schemes such as frequency-division multiplexing are implemented to avoid interference.

In this work, we consider the broadcast traffic in wireless broadcast packet erasure relay networks with channel output feedback, where there is a single source $s$, which wants to send $K$ independent messages to $K$ different destinations $t_1,t_2,\dots,t_K$ through a network of relays. Let $(R_1,R_2,\dots,R_K)$ denote the tuple of reliable transmission rates from $s$ to the $K$ destinations (the definition is the same as in \cite{DH05}). Our goal is to characterize the capacity region and the sum capacity of the network. Fig.~\ref{fig:general_network} shows an example of a three-layer $1$-to-$K$ broadcast packet erasure relay network.

\begin{figure}[t]
\centering
\includegraphics[width=0.4\textwidth]{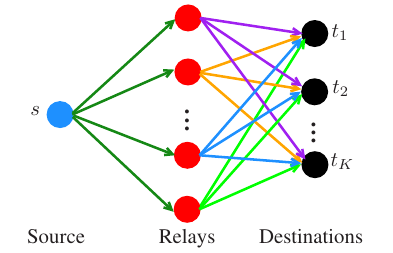}
\caption{A three-layer $1$-to-$K$ broadcast packet erasure relay network.}
\label{fig:general_network}
\end{figure}

\section{Outer Bound For Multiple Input PECs with Feedback}\label{sec:bound1}
In this section we  derive an outer bound on the capacity region of broadcast traffic in multiple input broadcast packet erasure channels with feedback.

Consider an $(M,K)$ multiple input broadcast packet erasure channels with feedback, where there is a single source $s$ connected to $K$ destinations via $M$ parallel $1$-to-$K$  broadcast packet erasure subchannels with feedback. Throughout this paper, we assume the channel is memory-less and erasure events over different subchannels are independent.

Define $[K] \triangleq \{1,2,3,\dots, K\}$ and let $\pi$ be a permutation function from $[K]$ to $[K]$. Let $\e_{ij}$ denote the channel erasure probability on the link connecting $s$ and $t_j$ of the $i$th subchannel. For any subset $A \subset [K]$, let $\e_{iA}$ denote the probability that for any $j\in A$, $t_j$ does not receive the packet on the $i$th subchannel. For example, if erasure events on all links are independent, then $\e_{iA} = \prod_{j \in A} \e_{ij}$. Lastly, we define $\pi(A) = \{ \pi(j) | j\in A\}$, for any subset $A \subset [K]$.

\begin{theorem}\label{thm:mpecouter}
For any achievable rate tuple $\R \triangleq (R_1,R_2,\dots,R_K)$, it must hold that for any permutation function $\pi: [K] \rightarrow [K]$,
\begin{align*}
\R \in \textbf{C}_{\pi},
\end{align*}
where
\begin{align*}
\textbf{C}_{\pi} \triangleq \bigl\{ &(R_1,R_2,\dots,R_K) |  \sum_{j=1}^{K}  \frac{R_{i\pi(j)}} { 1 - \e_{i\pi([j])}} \le 1,  \\
 &R_{k} = \sum_{m=1}^{M} R_{mk}, R_{i\pi(k)} \ge 0, \forall i \in [M], k \in [K] .\bigr\}
\end{align*}
\end{theorem}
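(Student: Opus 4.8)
The plan is to fix the permutation $\pi$ and, after relabeling the receivers so that $\pi$ is the identity, to reduce the feedback channel to a \emph{physically degraded} broadcast channel on which feedback is known to be useless, and then to invoke the capacity region of the degraded erasure broadcast channel. First I would introduce, on each subchannel $i$ and for each prefix length $j$, a genie-enhanced ``super-receiver'' $(i,j)$ that is declared to receive the transmitted symbol whenever at least one of $t_{\pi(1)},\dots,t_{\pi(j)}$ receives it on subchannel $i$; its erasure probability is then exactly $\e_{i\pi([j])}$. Since $\pi(j)\in\pi([j])$, super-receiver $(i,j)$ is stronger than the true receiver $t_{\pi(j)}$, so any code for the original channel still works on the enhanced channel and the enhanced capacity region contains the original one. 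By construction the enhanced subchannels are degraded in the \emph{same} order $\pi(1)\prec\pi(2)\prec\cdots\prec\pi(K)$, so the entire multiple-input channel becomes a parallel, hence degraded, broadcast channel in that order.

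The conceptual crux is the second step: feedback does not enlarge the capacity region of a physically degraded broadcast channel. I would observe that the per-receiver erasure indicators are already available to the transmitter through the original feedback, so the enhanced super-receiver feedback carries no new information; invoking the classical result that feedback does not increase the capacity of a physically degraded broadcast channel then collapses the enhanced-with-feedback region onto the enhanced-without-feedback region. This is precisely where the feedback is neutralized, and it is the step that makes an essentially no-feedback bound applicable to a channel that does possess feedback.

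It then remains to compute the capacity region of the resulting parallel degraded erasure broadcast channel without feedback. On a single subchannel $i$ I would use superposition coding with auxiliaries $U_{i,\pi(1)}\to\cdots\to U_{i,\pi(K)}=X_i$ together with the erasure-channel identities $I(U;\tilde Y)=(1-\e)\,H(U)$ and $I(X;\tilde Y\mid U)=(1-\e)\,H(X\mid U)$; writing $R_{i\pi(j)}$ for the rate carried at level $j$ and eliminating the entropy-split parameters yields exactly $\sum_{j=1}^{K} R_{i\pi(j)}/(1-\e_{i\pi([j])})\le 1$. Because the subchannels share no coupling resource (no joint power or cost constraint), the parallel region is the Minkowski sum of the per-subchannel regions, which is precisely the statement that there exist $R_{mk}\ge 0$ with $R_k=\sum_{m=1}^{M}R_{mk}$ satisfying all the per-subchannel constraints, i.e.\ membership in $\textbf{C}_{\pi}$. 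Chaining the three inclusions (original $\subseteq$ enhanced-with-feedback $=$ enhanced-without-feedback $=\textbf{C}_{\pi}$) proves the theorem. I expect the main obstacle to be the degraded-erasure-broadcast-channel converse in this weighted-sum form, together with its clean decomposition across the parallel subchannels; the degraded-feedback reduction, though conceptually central, reduces to a direct citation once the enhancement is set up correctly.
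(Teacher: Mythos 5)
Your proposal is correct and follows essentially the same route as the paper: the genie-aided ``super-receiver'' that succeeds whenever any of $t_{\pi(1)},\dots,t_{\pi(j)}$ succeeds is exactly the paper's construction of information pipes from $t_{\pi(k)}$ to $t_{\pi(k+1)}$, after which both arguments invoke El Gamal's result that feedback does not enlarge the capacity region of a physically degraded broadcast channel and then the capacity region of the (parallel, commonly ordered) degraded erasure broadcast channel. The only cosmetic difference is that you rederive that last region via superposition coding and a Minkowski-sum decomposition, whereas the paper cites it directly from Theorem 3 of its reference \cite{DH05}.
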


Theorem \ref{thm:mpecouter} gives a natural outer bound of the capacity region, and the proof idea is the same as the proof of Proposition 1 in \cite{CCWangIT2012} by introducing auxiliary pipes connecting destinations to create physically degraded subchannels. 

\begin{IEEEproof}
For any permutation function $\pi$, we construct new multiple input broadcast erasure channels with feedback by creating information pipes connecting node $t_{\pi(j)}$ to node $t_{\pi(j+1)}$, so that $t_{\pi(j+1)}$ will get all packets node $t_{\pi(j)}$ receives,  for all $j \in [K-1]$. Therefore,  the probability that node $t_{\pi(j)}$ receives the packet sent by the transmitter on the $m$th subchannel is $1 - \e_{m\pi([j])}$. The new multiple input broadcast erasure channels are physically degraded, so feedback does not change the capacity region \cite{Gamal78}. Therefore, we can assume there is no feedback. Since the capacity region of the broadcast channel without feedback only depends on the marginal distribution of the channel output, we can further assume that the erasure events over all links in the new model are independent. Lastly, applying Theorem 3 of \cite{DH05} on the capacity region of multiple input broadcast erasure channels without feedback, we conclude that $C_{\pi}$ is the capacity region of the new model. Therefore, for any achievable rates tuple $\R$ on the original channel model, we have $\R \in C_{\pi}$.
\end{IEEEproof}

\section{Outer Bound For Broadcast Packet Erasure Relay Networks}\label{sec:bound2}
In this section we give a new outer bound on the capacity region of broadcast traffic in broadcast packet erasure relay networks with feedback. Although the construction of the new outer bound appears complicated, the idea is relatively simple. We use the standard cut-set bound technique by dividing the network of nodes into two parts. While the cut-set bound assumes nodes in each part can fully cooperate, to derive the new outer bound, we  allow nodes in the source part to fully cooperate, but allow nodes in the destination part to only partially cooperate. In this way, we obtain a multiple input broadcast PEC, the capacity region of which  upper bounds the capacity region of the original packet erasure relay networks. By applying Theorem \ref{thm:mpecouter} to the new multiple input broadcast PEC, we get the new outer bound for the original network.

Consider a wireless packet erasure relay network modeled as a directed graph $\G = (\V, \E)$, with a single source $s$ and $K$ destinations $t_1, t_2, \dots,t_K$. As described in Section~\ref{Sec:PEC-RelayNet}, each directed edge $(i,j) \in \E $ has a associated packet erasure probability of $\e_{ij}$.  Let $A \subset \V$, which does not contain the source node $s$, i.e., $s \notin A$. Let $J(A) = \{ j | t_j \in A \}$, the set of indices of the destinations contained in $A$. Let $\E_A$ denote the edge cut corresponding to $A$. More precisely,
\begin{align*}
\E_A \triangleq \{(v,w) | (v,w)\in \E, v\notin A, w\in A \}.
\end{align*}

Let $\V_A \triangleq \{v\in \V | \exists w \in A, s.t.  (v,w) \in \E_A \}$, i.e., $\V_A$ denotes the set of nodes which are in $ \V / A$ and form the edge cut.  For any subset $\W_A \subset \V_A$,
we will derive an upper bound on the achievable rates tuple $ R_{J(A)} \triangleq \{R_i\}_{ i\in J(A)}$ in terms of $\E_A$ and $\W_A$.

We use the following algorithm to construct an edge set $\E^*$ and a vertex  set $\V^*$.

\begin{enumerate}
\item Initialization: set $\E^* = \emptyset, \V^* = \emptyset $.

\item For each node $w \in A$, if all edges which go to $w$ come from nodes in $\W_A$, then a) add these edges to $\E^*$, b) if $w$ is one of the $K$ destinations, add $w$ to $\V^*$, c) delete all edges that go to $w$ and leave from $w$, and delete $w$.

\item  For each node $w \in A$, if  no edges go to $w$, then a) delete all edges that leave from $w$, b) add $w$ to $\V^*$ if $w$ is one of the $K$ destinations, c) delete $w$.


\item Repeat step 2 and step 3 until no node will be deleted.

\item Consider the subgraph consisting of only nodes in $A$ which have not been deleted. Make the subgraph undirected, and then find out all the connected components. Delete the connected components which does not contain any destinations. Denote the remaining connected components by $\C_1, \C_2, \dots, \C_p$, where $\C_i$ is the set of vertices in the corresponding components. Then merge each component to a single super node, denoted by $v_1,v_2, \dots, v_p$.

\end{enumerate}

Now we use $\E_A, \V_A, \W_A, \E^*, \V^* $ to construct  multiple input broadcast PECs. The capacity region of this new network will yield an outer bound on the rate tuples $R_{J(A)}$.

For each node $v$ in $\V_A$, we process $v$ as follows:
\begin{enumerate}

\item Consider the edges which leave from $v$ and are in $\E_A$, and then divide them into two subsets $\E_1$ and $\E_2$: $\E_1$ is the set of edges in $\E^*$, and $\E_2$ is the set of edges in $\E_A \backslash  \E^*$.

\item Split node $v$ into two nodes $v_a$ and $v_b$.

\item Connect $v_a$ to each node in $\V^*$ and the super nodes $v_1,v_2,\dots,v_p$ via broadcast  PECs with feedback, with erasure probability $ \prod_{(i,j)\in \E_1} \e_{ij}$. If $\E_1$ is an empty set,  set the erasure probability to be $1$ (or equivalently,  we can delete node $v_a$). Further, we assume the erasure events over all these links are  the same, i.e., if the packet over some link is erased, then packets over all links are erased.

\item Connect $v_b$ to each super node $v_1,v_2,\dots,v_p$ via a broadcast packet erasure channel with feedback. The erasure probability of each link is set as follows. For each $i \in [p]$, let $\hat{\E}_i$ denote the set of edges  in  $\E_2$  which leave from $v$ and go to some node in $\C_i$. If $\E_i$ is an empty set, then set the erasure probability to be $1$; otherwise, set the erasure probability to be $\prod_{(i,j)\in \hat{\E}_i} \e_{ij}$.

\end{enumerate}

After processing  all the nodes $v$ in $\V_A$, we allow each  newly created node to fully cooperate with each other and the source node $s$. Let $N$ be the total  number of nodes in $\V^*$ plus $p$.  Therefore we get a multiple input $1$-to-$N$ broadcast packet erasure channel with feedback. Note that, by construction, the erasure events over all links in the subchannel  are not independent for some subchannels. (Indeed they are completely correlated.)

Next we upper bound $R_{J(A)}$ by using an outer bound on the capacity region of this multiple input broadcast PEC with feedback. By our construction, for each $j\in J(A)$, the destination $t_j$ is  contained in either  $\V^*$ or $\C_i$ for some $i\in [p]$. For each $ i \in [p]$, let $J_i$ denote the set of indices of destinations which are in $\C_i$. More precisely, $J_i \triangleq \{j\in J(A) | t_j \in C_i \}$.

Relabel the $N$ sink nodes in the new multiple input PECs by $d_1,d_2,\dots, d_N$.   Define $N$ auxillary variables $Q_1, Q_2, \dots, Q_N$, where $Q_i \triangleq R_j$ if $d_i$ is the single destination $t_j$, or $Q_i \triangleq \sum_{k\in J_j} R_k$ if $d_i$ is the super node $v_j$.

We can use Theorem \ref{thm:mpecouter} to upper bound the capacity region of this new $1$-to-$N$ multiple input PEC with feedback. Let $\N(A, \W_A)$ denote this new multiple input broadcast PEC with feedback, and let $R_{outer} (\N(A, \W_A))$ denote the outer bound of the capacity region of $\N(A, \W_A)$ given in Theorem \ref{thm:mpecouter}. Then
\begin{theorem} \label{thm:uppnetwork}
\begin{align}
  (Q_1,Q_2,\dots,Q_N)  \in R_{outer} (\N(A, \W_A)). \label{eqn:uppQ}
\end{align}
  
\end{theorem}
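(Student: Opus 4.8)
The plan is to show that the constructed multiple input broadcast PEC $\N(A,\W_A)$ is an \emph{enhanced} version of the original network $\G$ for the purpose of delivering the messages destined to the sinks in $A$, so that achievability of $\R$ in $\G$ forces the aggregated tuple $(Q_1,\dots,Q_N)$ to be achievable in $\N(A,\W_A)$. The theorem then follows immediately, since Theorem~\ref{thm:tigherbound} asserts that every achievable tuple of $\N(A,\W_A)$ lies in $R_{outer}(\N(A,\W_A))$, which is exactly \eqref{eqn:uppQ}.

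First I would set up the cut. Since $s\notin A$, every message intended for a destination $t_j$ with $j\in J(A)$ must cross the edge cut $\E_A$. I would lump all nodes outside $A$ together with $s$ into a single super transmitter and allow them to cooperate fully; this is the standard cut-set enhancement and can only enlarge the achievable region. After this step the only information reaching $A$ is what is pushed across $\E_A$ by the nodes in $\V_A$, and each such node becomes a transmitter of the resulting multiple input PEC.

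Next I would justify the node-level construction as a sequence of enhancements, each of which enlarges the region. First, splitting each $v\in\V_A$ into $v_1$ (carrying the edges $\E_1=\E^*$) and $v_2$ (carrying $\E_2=\E_A\backslash\E^*$) and letting the two halves transmit on separate subchannels can only help, as it gives the super transmitter more degrees of freedom. Second, for the $v_1$ subchannel I would impose the completely correlated erasure model: whenever at least one edge of $\E_1$ would have delivered the packet in $\G$, an event of probability $1-\prod_{(i,j)\in\E_1}\e_{ij}$, I declare that \emph{all} receivers attached to $v_1$ obtain the packet. Since in the original channel only the endpoint of a successful edge receives it, handing the packet to every receiver delivers strictly more information and is therefore an enhancement. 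Third, within each surviving connected component $\C_i$ I would let all nodes cooperate fully and collapse them into the super node $v_i$; cooperation is again an enhancement, and the collapsed node must then be able to decode the sum $\sum_{k\in J_i}R_k=Q_i$ of the individual messages of the destinations it absorbs. The deletion steps of the algorithm are precisely the bookkeeping that certifies which destinations receive their information solely through $\E^*$, so that they become the singleton sinks in $\V^*$ with $Q_i=R_j$, versus those whose connectivity is more entangled and end up inside the components.

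With these enhancements in place, the object is a bona fide multiple input $1$-to-$N$ broadcast PEC with feedback whose capacity region contains the image of the capacity region of $\G$ under the aggregation $\R\mapsto(Q_1,\dots,Q_N)$; hence $(Q_1,\dots,Q_N)$ is achievable in $\N(A,\W_A)$, and Theorem~\ref{thm:tigherbound} delivers \eqref{eqn:uppQ}. I expect the main obstacle to be making the enhancement argument of the preceding paragraph fully rigorous, i.e., proving via a genie coupled with the correlated-erasure construction that each singleton sink or super node in $\N(A,\W_A)$ receives, per channel use, at least as much information as the corresponding destination(s) obtain across $\E_A$ in any code for $\G$. In particular one must verify that the correlated-erasure coupling together with the component-wise cooperation never accidentally \emph{removes} information that the original sinks could exploit, which is exactly where the precise role of the deletion algorithm and of the erasure probabilities $\prod_{(i,j)\in\E_1}\e_{ij}$ and $\prod_{(i,j)\in\hat\E_i}\e_{ij}$ must be pinned down.
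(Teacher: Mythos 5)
Your proposal is correct and follows essentially the same route as the paper's own (quite terse) proof sketch: argue that each step of the construction — full cooperation of the source side, delivering a packet on the $\E^*$ edges to \emph{all} sinks via the completely correlated erasure model, and collapsing each connected component into a fully cooperating super node — is an enhancement of the original network for the destinations in $A$, and then invoke Theorem~\ref{thm:tigherbound} on the resulting multiple input PEC. Your version is in fact more explicit than the paper's about why the correlated-erasure coupling and the deletion of relay nodes lose nothing, which is the right place to focus if one wanted to make the sketch fully rigorous.
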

Therefore, Equation \eqref{eqn:uppQ} gives an outer bound on $R_{J(A)}$.

\begin{proof}
 We only need to argue that the new channel $\N(A, \W_A)$ is no worse than the original network for destinations $d_j, j\in J(A)$. First note that, there is no loss by deleting nodes of which the incoming edges come from $\W_A$, since in the last step we connect these edges directly  to all destinations. Second, there is no loss by merging the connected component to a super node, since in this way all the nodes in the connected component are assumed to fully cooperate with each other and behave like a single node.  Third, there is no loss by assuming  the new created nodes can cooperate and share the same message with source $s$. Therefore, the capacity region of the new constructed multiple input packet erasure channel upper bounds the achievable rates tuple $R_J(A)$.
\end{proof}

\section{Tightness of Outerbounds}\label{sec:example}
In this section, we show that for certain classes of networks, the outer bounds derived in Section \ref{sec:bound1} and Section \ref{sec:bound2}  on the capacity region of multiple input packet erasure channels with feedback and broadcast packet erasure relay networks with feedback are tight in terms of sum rate.

Theorem \ref{thm:mpecouter} gives an outer bound on the capacity region of multiple input broadcast erasure channels with feedback. A natural inner bound can be derived  by adding the capacity region of each subchannel. In general, this inner bound without coding across subchannels does not match the outer bound. In the following, we show that for certain simple two-input $1$-to-$2$ broadcast packet erasure channels with feedback, while the above inner bound does not match the outer bound, the maximum sum rate of the outer bound can be achieved by coding across the two subchannels.

\begin{figure}[t]
\centering
\includegraphics[width=0.40\textwidth]{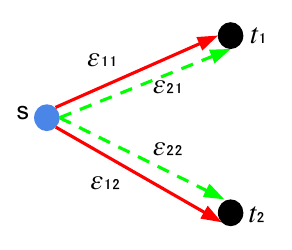}
\caption{A two-input $1$-to-$2$ broadcast packet erasure channels with feedback.}
\label{fig:one_to_two}
\end{figure}

Consider a two-input $1$-to-$2$ broadcast packet erasure channels with feedback where there are two destinations $t_1$ and $t_2$ and two subchannels (see Fig.~\ref{fig:one_to_two}). In the first subchannel, the packet erasure probabilities are $\e_{11} = \e_{12} \triangleq \e_1$ and erasure events on the two links are independent of each other. In the second subchannel, $\e_{21} = \e_{22} \triangleq \e_2$ and erasure events on the two links are the same, i.e., at any time, $t_1$ receives a packet if and only if $t_2$ also receives the packet. We assume erasure events on different subchannels are independent.

\begin{theorem}\label{thm:ex1}
  If $\e_2 \ge 1 - \frac{(1-\e_1)\e_1}{2}$ and $0 < \e_1,\e_2 < 1$, then the maximum sum rate of the outer bound in Theorem \ref{thm:mpecouter} is tight and can be  achieved by coding across subchannels, which thus characterizes the sum capacity of this channel. In addition, the inner bound without coding across subchannels is strictly sub-optimal.
\end{theorem}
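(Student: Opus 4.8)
The plan is first to evaluate the maximum sum rate of the outer bound of Theorem~\ref{thm:tigherbound} for this $(2,2)$ channel, then to exhibit a cross-subchannel coding scheme that meets it, and finally to separate it from the no-coding inner bound. For the first subchannel (independent erasures) the two permutations give $\frac{R_{11}}{1-\e_1}+\frac{R_{12}}{1-\e_1^2}\le 1$ and the same relation with $R_{11},R_{12}$ interchanged, while for the second subchannel (fully correlated erasures) $\e_{2\{1,2\}}=\e_2$, so every permutation collapses to $R_{21}+R_{22}\le 1-\e_2$. I would maximize $R_1+R_2$ over $\bigcap_{\pi^1,\pi^2}\textbf{C}_{(\pi^1,\pi^2)}$, using that each $\textbf{C}_{(\pi^1,\pi^2)}$ is a convex projection and that a \emph{separate} rate split may be used for each permutation tuple; this last point is essential, since it is precisely what makes the bound strictly larger than the Minkowski sum of the two single-subchannel regions. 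Because the feasible set is convex and symmetric under $R_1\leftrightarrow R_2$, the maximizer lies on $R_1=R_2=t$, and solving the resulting one-dimensional program (taking $R_{22}=0$, $R_{21}=1-\e_2$ in the split, admissible since the hypothesis forces $\e_2>\e_1^2$) yields maximum sum rate $2t^{*}$ with $t^{*}=\frac{(1+\e_1)(2-\e_1-\e_2)}{2+\e_1}$.

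Next I would construct a scheme achieving the symmetric point $R_1=R_2=t^{*}$. Fresh packets are sent only on subchannel~1, each retransmitted until at least one receiver gets it, so a $(1-\e_1^2)$ fraction of those uses is productive; a received fresh packet reaches its intended user with conditional probability $\frac{1}{1+\e_1}$ (delivered) and otherwise reaches only the other user, with conditional probability $\frac{\e_1}{1+\e_1}$, becoming a side-information packet. Side-information packets are cleared by broadcasting xor combinations: over subchannel~2, where erasures are fully correlated, an xor reaches both users or neither, so no partial delivery is wasted and one pair is resolved per $\frac{1}{1-\e_2}$ uses; over subchannel~1 a pipelined xor queue resolves one pair per $\frac{1}{1-\e_1}$ uses, which is the mechanism attaining the single-subchannel capacity. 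I would devote all of subchannel~2 and a fraction $1-f$ of subchannel~1 to xor, and a fraction $f$ of subchannel~1 to fresh packets, then balance the total xor throughput against the side information generated. The per-user rate is $\frac{f(1-\e_1^2)}{2}$, and the balance equation forces $f=\frac{2(2-\e_1-\e_2)}{(1-\e_1)(2+\e_1)}$, which reproduces $t^{*}$ exactly.

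The role of the hypothesis is then transparent: the scheme is feasible precisely when $f\le 1$, and a direct computation shows $f\le 1\iff \e_2\ge 1-\frac{(1-\e_1)\e_1}{2}$, which is the stated condition (when it fails, subchannel~2 alone absorbs all the side information, subchannel~1 is entirely fresh, and the achievable symmetric rate saturates at $\frac{1-\e_1^2}{2}$, strictly below the bound). Finally, the inner bound without coding across subchannels is the sum of the two single-subchannel sum capacities, namely $\frac{2(1-\e_1^2)}{2+\e_1}+(1-\e_2)$. Writing the outer bound as $\frac{2(1-\e_1^2)}{2+\e_1}+\frac{2(1+\e_1)}{2+\e_1}(1-\e_2)$ and noting $\frac{2(1+\e_1)}{2+\e_1}>1$ for $\e_1>0$ shows the gap is strictly positive, proving strict sub-optimality of the separated scheme.

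The main obstacle is the achievability analysis rather than the converse: one must verify that the pipelined xor on subchannel~1 really drains the two side-information queues at asymptotic cost $\frac{1}{1-\e_1}$ per pair (a concentration/boundary-layer argument on the simultaneous queue-draining processes), and that the fresh, correlated-xor, and independent-xor sub-blocks can be interleaved over a long horizon without wasting feedback or introducing dependencies that spoil these counts. Getting the allocation $f$ correct, and recognizing that the binding constraint $f\le 1$ is exactly the hypothesis, is the crux; the converse is then a routine, if slightly delicate, linear-program evaluation whose only subtlety is permitting a distinct rate split per permutation so that the bound exceeds the Minkowski-sum (no-coding) region.
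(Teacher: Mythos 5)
Your proposal is correct and follows essentially the same route as the paper: the same outer-bound evaluation giving $R^{outer}_{sum}=\frac{2(1+\e_1)(2-\e_1-\e_2)}{2+\e_1}$, the same two-phase scheme (fresh packets on subchannel~1, random linear combinations of the two side-information queues on subchannel~2 and on the residual fraction of subchannel~1), and the same gap $\frac{\e_1(1-\e_2)}{2+\e_1}$ to the separated inner bound. Your time fraction $f$ is exactly the paper's $\frac{2N}{2N+N^*}$, and your feasibility condition $f\le 1$ is exactly the paper's $N^*\ge 0$, both equivalent to the hypothesis $\e_2\ge 1-\frac{(1-\e_1)\e_1}{2}$.
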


\begin{IEEEproof}
First, we calculate the maximum sum rate of the inner bound via independent communication scheme over the two subchannels.

For the second subchannel, since the erasure events on the two links are the same, the sum rates $1-\e_2$ can be achievable via time-sharing over the second subchannel.

Applying the result from \cite{GT09}, the capacity region of the first subchannel is the intersection of 
\begin{align}
\{ (R_{11}, R_{12}) \; | \;  \frac{R_{11}}{1-\e_1} +  \frac{R_{12}}{1-\e_1^2} \le 1, R_{11}, R_{12} \ge 0 \}
\end{align}
and 
\begin{align}
\{ (R_{11}, R_{12}) \; | \; \frac{R_{11}}{1-\e_1^2} +  \frac{R_{12}}{1-\e_1} \le 1, R_{11}, R_{12} \ge 0 \}.
\end{align}

Due to the symmetry and convexity of the capacity region, the sum rates over the first subchannel is maximized when $R_{11}=R_{12}$. Set $R_{11}  = R_{12} = a$. Then we have 
\[  \frac{a}{1-\e_1} +  \frac{a}{1-\e_1^2} \le 1, \]
 and thus $a \le \dfrac{1-\e_1^2}{2+\e_1}$.

Therefore, the maximum sum rates over the first subchannel is $\frac{2(1-\e_1^2)}{2 + \e_1}$. Combining the sum rates from both subchannels, we have

\begin{align*}
 R^{inner}_{sum} = \frac{2(1-\e_1^2)}{2 + \e_1} + 1 - \e_2.
 \end{align*}
 
 

Next, we derive an outerbound on the sum rates using Theorem  \ref{thm:mpecouter}.
 By Theorem \ref{thm:mpecouter}, the outer bound on the capacity region is
  $\V \triangleq \V_1 \cap \V_2$,
where 
\begin{align}
\V_1 \triangleq  \{(R_1,R_2) | & \frac{R_{11}}{1 - \e_1^2} + \frac{R_{12}}{1 - \e_1} \le 1, \label{eqn:t1} \\
& R_{21} + R_{22} \le 1 - \e_2, \label{eqn:t2} \\
& R_1 = R_{11} + R_{21}, \label{eqn:t3} \\
& R_2 = R_{12} + R_{22}, \label{eqn:t4} \\
&R_{ij} \ge 0, \forall i,j \in [2]\}, \nonumber
\end{align}
 and
 \begin{align*}
 \V_2 \triangleq \{(R_1,R_2) | & \frac{R_{11}}{1 - \e_1} + \frac{R_{12}}{1 - \e_1^2} \le 1, \\
 & R_{21} + R_{22} \le 1 - \e_2, \\
 & R_1 = R_{11} + R_{21}, \\
 & R_2 = R_{12} + R_{22}, \\
 & R_{ij} \ge 0, \forall i,j \in [2]\}.
 \end{align*}

Suppose $(R_1^*, R_2^*)$ is the solution to
\begin{align*} 
\max_{(R_1,R_2) \in  \V} R_1 + R_2.
\end{align*}
Due to the symmetry and convexity of $\V$, without loss of generality, we can assume $R_1^* = R_2^*$. Therefore,
\begin{align*}
  \max_{(R_1,R_2) \in  \V} R_1 + R_2 = \max_{(R_1, R_2) \in \V_1, R_1 = R_2}  R_1 + R_2.
\end{align*}
Let $R_1 = R_2 = x$. Then from \eqref{eqn:t1}, \eqref{eqn:t3} and \eqref{eqn:t4}, we have
\begin{align*}
  \frac{x - R_{21}}{1 - \e_1^2} + \frac{x - R_{22}}{1 - \e_1} \le 1.
\end{align*}
Therefore, $ x \frac{2 + \e_1}{ 1 - \e_1^2} \le 1 + \frac{R_{21}}{ 1-\e_1^2} + \frac{R_{22}}{1 - \e_1}$. Due to \eqref{eqn:t2}, $R_{21} + R_{22} \le 1 - \e_2$, and thus
\begin{align*}
x \frac{2 + \e_1}{ 1 - \e_1^2} &\le 1 + \frac{R_{21}}{ 1-\e_1^2} + \frac{R_{22}}{1 - \e_1}  \le 1 + \frac{R_{21}}{ 1-\e_1} + \frac{R_{22}}{1 - \e_1} \\
&= 1 +  \frac{R_{21} + R_{22} }{ 1-\e_1}  \le 1 + \frac{1-\e_2}{1-\e_1}.
\end{align*}
Therefore, $x \le \dfrac{(2 - \e_1 - \e_2)(1+\e_1)}{2 + \e_1}$.

When $x = \dfrac{(2 - \e_1 - \e_2)(1+\e_1)}{2 + \e_1}$,
\begin{align*}
  R_{11} &= x \ge 0, R_{21} = 0 ,\\
  R_{12} &= x - (1 - \e_2) = \frac{\e_2 - \e_1^2}{ 2 + \e_1} \ge \frac{2 - \e_1 - \e_1^2}{2(2 + \e_1)} \ge 0,\\
  R_{22} &= 1 - \e_2 \ge 0,   
\end{align*}
so this is a valid solution, i.e., $(x,x) \in \V_1 \cap \V_2 $. Therefore, the maximum sum rates of the outer bound is
 \begin{align}\label{eqn:uppsum}
  R^{outer}_{sum} = \dfrac{2(2 - \e_1 - \e_2)(1+\e_1)}{2 + \e_1}.
 \end{align}

It is easy to verify that there is a nonzero gap between the inner bound and outer bound:
\begin{align*}
   & R^{outer}_{sum} - R^{inner}_{sum} \nonumber \\
=& \frac{2(2 - \e_1 - \e_2)(1+\e_1)}{2 + \e_1} - (\frac{2(1-\e_1^2)}{2 + \e_1} + 1 - \e_2 ) \\
=& \frac{\e_1 (1-\e_2)}{2 + \e_1} > 0,
\end{align*}
for $0 < \e_1,\e_2 <1$.

Next, we describe a coding scheme to achieve the outer bound $R^{outer}_{sum}$. The scheme is essentially the same as the two-phase scheme introduced in \cite{GT09}, except that we use the second subchannel only to send the ``xor'' packets. For the first subchannel:
\begin{enumerate}
  \item In the first $N$ transmissions, the source $s$ sends packets for destination $t_1$. The source resends the same packet if and only if neither  $t_1$ nor $t_2$ receives the packet. After the first $N$ transmissions, on average, $t_1$ receives $N(1 - \e_1)$ packets, and $t_2$ receives $N(1-\e_1)\e_1$ packets which are for $t_1$ but not received by $t_1$. Denote these packets received by $t_2$ only by $P_1$.

  \item Similarly, during the second $N$ transmissions, the source $s$ sends packets for destination $t_2$. The source resends the same packet if and only if neither $t_1$ nor $t_2$ receives the packet. After the second $N$ transmissions, on average, $t_2$ receives $N(1 - \e_1)$ packets, and $t_1$ receives $N(1-\e_1)\e_1$ packets which are for $t_2$ but not received by $t_2$. Denote these packets received by $t_1$ only by $P_2$.

  \item In the following $N^*$ transmissions, where $N^*$ is to be determined soon, $s$ sends out packets which are random linear combinations of $P_1$ and $P_2$.
\end{enumerate}

For the second subchannel, during the $N + N + N^*$ transmissions, $s$ only sends out random linear combinations of $P_1$ and $P_2$. Note that one can use block coding to resolve the noncausality issue.

The destination $t_1$ needs $N(1 - \e_1)\e_1$ packets  which are random linear combinations of $P_1$ and $P_2$ to decode $P_1$. Similarly, the destination $t_2$ needs $N(1 - \e_1)\e_1$ packets  which are random linear combinations of $P_1$ and $P_2$ to decode $P_2$. After all the transmissions, on average, both $t_1$ and $t_2$ will receive
\begin{align*}
  N^*(1-\e_1) + (2 N + N^*) (1 - \e_2)
\end{align*}
random linear combinations of $P_1$ and $P_2$.

Therefore, $N^*(1-\e_1) + (2 N + N^*) (1 - \e_2) = N (1 - \e_1)\e_1$, and thus 
\begin{align*}
N^* = \frac{(1-\e_1)\e_1 - 2(1-\e_2)}{2 - \e_1 - \e_2}N,
\end{align*}
which is positive when $\e_2 \ge 1 - \frac{(1-\e_1)\e_1}{2}$.

After these $N+N+N^*$ transmissions, $t_1$ can decode $N(1-\e_1) + N(1-\e_1)\e_1 = N(1-\e_1^2)$ packets for it, and similarly  $t_2$ can also decode $N(1-\e_1) + N(1-\e_1)\e_1 = N(1-\e_1^2)$ packets for it. Therefore, the achieved sum rate is
\begin{align*}
  \frac{N(1-\e_1^2) + N(1-\e_1^2) }{ N + N + N^*} &= \frac{2(1-\e_1^2)}{2 + \frac{(1-\e_1)\e_1 - 2(1-\e_2)}{2 - \e_1 - \e_2} } \\
& = \frac{2 (1 + \e_1)( 2 - \e_1 - \e_2)}{2 + \e_1} \\
&= R^{outer}_{sum}.
\end{align*}

Therefore this coding scheme achieves $R^{outer}_{sum}$, which is thus the sum capacity of this channel.

\end{IEEEproof}

Based on the above two input $1$-to-$2$ broadcast PECs with feedback, we can construct a  packet erasure relay network with feedback where the new outer bound in Theorem \ref{thm:uppnetwork} is tight in terms of sum rate.

\begin{figure}[t]
\centering
\includegraphics[width=0.4\textwidth]{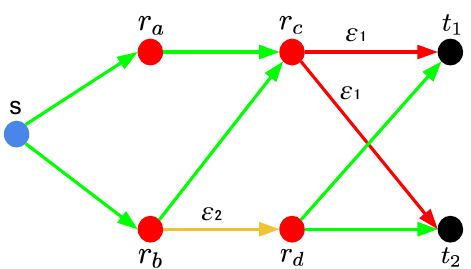}
\caption{A $1$-to-$2$ broadcast packet erasure relay network.}
\label{fig:network}
\end{figure}

Consider the broadcast packet erasure relay network in Fig.~\ref{fig:network}, where all links are independent and the packet erasure probabilities are zero for all links except the links $(r_c, t_1), (r_c, t_2)$ and $(r_b, r_d)$ with erasure probability $\e_1,\e_1$ and $\e_2$, respectively. In addition, $\e_1$ and $\e_2$ satisfy the assumption in Theorem \ref{thm:ex1}.
We can derive the sum capacity of this network by applying Theorem \ref{thm:uppnetwork} as follows. In the algorithm given in Section \ref{sec:bound2}, take $A = \{t_1, t_2, r_d\}$. Thus $\E_A = \{(r_c, t_1), (r_c, t_2), (r_b, r_d) \}$ and $\V_A = \{r_b, r_c \}$. Take $\W_A = \{ r_b\}$. Then the new network $\N(A, \W_A)$ constructed by the algorithm is exactly the channel described in Theorem \ref{thm:ex1} with the same channel parameters. Therefore, Theorem \ref{thm:uppnetwork} upper bounds the sum rate of the network in Fig.~\ref{fig:network} by $R^{outer}_{sum}$ in Equation \eqref{eqn:uppsum}. Furthermore,  it is  achievable by using the corresponding scheme in the proof of Theorem \ref{thm:ex1}.

\section{Conclusion}\label{sec:conclusion}

We derive a new outer bound on the capacity region of broadcast traffic in multiple input broadcast packet erasure channels with feedback, and  extend this outer bound to packet erasure relay networks  with feedback. The new outer bound involves  the standard cut-set bound technique and the capacity region of the degraded broadcast channel. We show the tightness of the outer bound for certain classes of networks. One important engineering implication of this work is that for network coding schemes for parallel broadcast channels, the ``xor'' packets should be sent over the correlated broadcast subchannels.

\section*{Acknowledgment}\label{sec:ackowledgementd}

This work was done when Quan Geng was with University of Illinois.
This work was done when Hieu Do was with KTH Royal Institute of Technology in Stockholm, and was visiting University of Illinois. 

The authors would like to thank Dr. Pramod Viswanath and Dr. Sreeram Kannan
 for the helpful discussions.

\bibliographystyle{IEEEtran}
\bibliography{reference}

\end{document}